\definecolor{plum}  {rgb}{.4,0,.4}
\newcommand{\modified}[1]{#1}
\newcommand{\xs}{\boldsymbol{\alpha}^*}
\newcommand{\xh}{\hat{\boldsymbol{x}}}
\newcommand{\xt}{\tilde{\boldsymbol{x}}}
\newcommand{\xb}{\boldsymbol{x}}
\newcommand{\that}{\hat{\boldsymbol{f}}}
\newcommand{\ttil}{\tilde{\boldsymbol{f}}}
\newcommand{\tb}{\boldsymbol{f}}
\newcommand{\ybs}{{\boldsymbol{\beta^*}}}
\newcommand{\ybh}{{\boldsymbol{\hat{\beta}}}}
\newcommand{\yb}{\boldsymbol{y}}
\newcommand{\post}{\textit{a posteriori }}
\newcommand{\Phii}{\boldsymbol{\Phi}}
\newcommand{\Ex}{\mathbb{E}}
\newcommand{\w}{\mathbb{Z}_+}
\newcommand{\A}{\Phii}
\newcommand{\At}{\tilde{A}}
\newcommand{\RE}{\mbox{KL}}
\newcommand{\la}{\boldsymbol{g}}
\newcommand{\lb}{\boldsymbol{h}}
\newcommand{\Yb}{\boldsymbol{Y}}
\newcommand{\ii}{{\Lambda}}
\newcommand{\io}{\boldsymbol{I}_\ii}
\newcommand\argmin{\operatornamewithlimits{argmin}}
\newtheorem{theorem}{Theorem}[section]
\newtheorem{definition}[theorem]{Definition}
\newtheorem{proposition}{Proposition}[theorem]
\newtheorem{remark}[theorem]{Remark}
\newtheorem{lemma}[theorem]{Lemma}
\title{Performance Bounds for Expander-Based\\
Compressed Sensing in the Presence\\
of Poisson Noise}
\author{\authorblockN{Sina Jafarpour}
\authorblockA{Computer Science \\ Princeton University
   \\ Princeton, NJ 08540, USA}
\and
\authorblockN{Rebecca Willett, Maxim Raginsky}
\authorblockA{Electrical and Computer Engineering\\
Duke University\\
Durham, NC 27708, USA}
 \and
 \authorblockN{Robert Calderbank\thanks{Copyright 2001 SS\&C. Published in the Proceedings of the Asilomar Conference on Signals, Systems, \& Computers, Nov. 1st-4th, 2009, in Pacific Grove, CA.}}
 \authorblockA{Electrical Engineering \\ Princeton University
    \\ Princeton, NJ 08540, USA}
% \and
% \authorblockN{Maxim Raginsky}
% \authorblockA{Electrical and Computer Engineering\\
% Duke University\\
% Durham, NC 27708, USA}
}
\begin{document}

\maketitle
\begin{abstract}
  This paper provides performance bounds for compressed sensing in the presence of Poisson noise using
  expander graphs. The Poisson
  noise model is appropriate for a variety of
  applications, including low-light imaging and digital streaming,
  where the signal-independent and/or bounded noise models used in the
  compressed sensing literature are no longer applicable. In this paper, we develop a novel sensing paradigm based on expander graphs and propose a
  MAP algorithm for recovering sparse or compressible signals from Poisson observations. The geometry
  of the expander graphs and the positivity of the corresponding sensing matrices play a crucial role in establishing the bounds on the
  signal reconstruction error of the proposed
  algorithm. The geometry of the expander graphs makes them provably
  superior to random dense sensing matrices, such as Gaussian or
  partial Fourier ensembles, for the Poisson noise model. We support our
  results with experimental demonstrations.

\end{abstract}
\section{Introduction}

  The goal of \textit{compressive sampling} or \textit{compressed
sensing} (CS) \cite{donoho,CRT} is to replace conventional sampling by
a more efficient data acquisition framework, requiring fewer
measurements whenever the measurement or compression is costly. This
paradigm is particularly enticing in the context of photon-limited
applications (such as low-light imaging) and digital fountain codes,
since photo-multiplier tubes used in photon-limited imaging are large
and expensive, and the number of packets transmitted via a digital
fountain code is directly tied to coding efficiency. In these and
other settings, however, we cannot directly apply standard methods and
analysis from the CS literature, since these are based on assumptions
of bounded, sparse, or Gaussian noise. Therefore, very little is known
about the validity or applicability of compressive sampling to
photon-limited imaging systems and streaming data communication.

The Poisson model is often
used to model images acquired by photon-counting devices, particularly
when the number of photons is small and a Gaussian approximation is
inaccurate \cite{SnyderCCD}.  Another application is data streaming,
in which streams of
data are transmitted
through a channel with Poisson statistics.
% This resembles the digital-fountain coding approach
% \cite{digitalFountains}, in which the sparsity of the graphs
% \todo{What graphs? I think you're trying to relate our approach to
% digital fountain codes, but it's not entirely clear. Might be better
% to say something like ``In the current context, the sparsity of the
% expander graphs used to form the sensing matrix $\Phii$ results in small
% packet sizes, the compressive measurements $\Phii \alpha^*$ correspond to
% a relatively small number of fountains, and the complexity
% regularlized recovery algorithm guarantees decoding accuracy.''}
% implies small packet length, the compressive measurements resemble the
% few fountains, and the MAP decoding guarantees recovery of the
% transmitted data.

The Poisson model, commonly used to describe
photon-limited measurements and discrete-time memoryless Poisson
communication channels, pose significant theoretical and
practical challenges in the context of CS.  One of the key challenges
is the fact that the measurement error variance scales
with the true intensity of each measurement, so that we cannot assume
uniform noise variance across the collection of measurements.  The
approach considered in this paper hinges, like most CS methods, on
reconstructing a signal from compressive measurements by optimizing a
sparsity-regularized data-fitting expression. In contrast to many CS
approaches, however, we measure the fit of an estimate to the data
using the Poisson log likelihood instead of a squared error term.

In previous work \cite{wr,rhmw}, we showed that a Poisson noise model
combined with conventional dense CS sensing matrices (properly scaled)
yielded performance bounds which were somewhat sobering relative to
bounds typically found in the literature. In particular, we found that
if the number of  photons (or packets) available to sense were held
constant, and if the number of measurements, $m$, was above some
critical threshold, then larger $m$ in general led to larger bounds on
the error between the true and the estimated signals. This can
intuitively be understood as resulting from the low signal-to-noise
ratio of each of the $m$ measurements, which decays with $m$ when the
number of photons (packets) is held constant. 

This paper demonstrates that the bounds developed in previous work can
be improved by considering alternatives to dense sensing matrices
formed by making iid draws from a given probability distribution. In
particular, we show that sensing matrices given by scaled adjacency
matrices of expander graphs have important theoretical characteristics
(especially an $\ell_1$ version of the {\em restricted isometry
property}) which are ideally suited to controlling the performance of
Poisson CS.

Expander graphs have been recently proposed as an
alternative to dense random matrices within the compressed sensing
framework, leading to computationally efficient recovery algorithms
\cite{Sina, newIndyk, BIR}. The approach described in this paper consists of the following key
elements:
\begin{itemize}
\item expander sensing matrices and the RIP-1 associated with them;
\item a reconstruction objective function which explicitly
incorporates the Poisson likelihood;
\item a collection of candidate estimators; and
\item a penalty function defined over the collection of candidates
which satisfies the Kraft inequality and which can be used to promote
sparse reconstructions.
\end{itemize}

 \section{Compressed Sensing using Expander Graphs}
 \label{sec:expander}
 We start by defining an \textit{expander
  graph}.
 \begin{definition}[Expander Graph]
\label{genexpand}
A $(k,\epsilon)$-\textit{expander graph}
is a {\textit{bipartite}} graph $ V=(A,B), |A|=$ $n,~ |B|=$ $m$, where
$A$ is the set of variable nodes and $B$ is the set of parity (or check) nodes,
which is unbalanced, i.e $m=o(n)$, and is left regular with left
degree $d$, such that for any $S \subset A$ with $|S|\leq k$ the
set of neighbors ${\cal N}(S)$ of $S$ has size $|{\cal
  N}(S)|>(1-\epsilon)d|S|$ .
\end{definition}

Expander graphs have been recently proposed as a means of constructing efficient
compressed sensing algorithms \cite{Sina,newIndyk,BIR}.  Figure~\ref{exp1} illustrates 
such a graph. The following proposition, proved using probabilistic methods, states that expander graphs are
optimal in terms of the number of measurements required for
compressive sampling:

\begin{figure}[!t]
\centering
\includegraphics[width=3.5in]{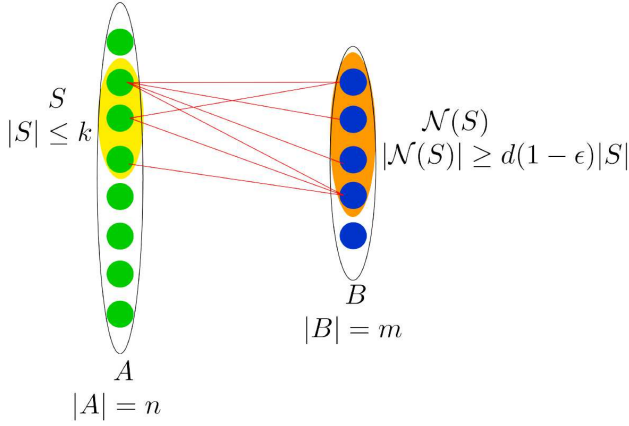}
\caption{A $(k,\epsilon)$-expander graph. In this example, the green
  nodes correspond to $A$, the blue nodes correspond to $B$, the yellow
  oval corresponds to the set $S \subset A$, and the orange oval corresponds to the
  set ${\cal N}(S) \subset B$. There are three colliding
  edges.}
\label{exp1}
\end{figure}

\begin{proposition}
  For any $1 \leq k \leq \frac{n}{2}$ and any positive $\epsilon$,
  there exists a $(k,\epsilon)$-expander graph with left degree
  $d=O\left(\frac{\log(\frac{n}{k})}{\epsilon}\right),$ and right set
  size $m=O\left(\frac{k\log(\frac{n}{k})}{\epsilon^2}\right).$
\end{proposition}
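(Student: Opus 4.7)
The plan is a classical probabilistic-method argument. I would first consider a random bipartite graph $G=(A,B)$ with $|A|=n$ and $|B|=m$, in which each left vertex independently picks $d$ right neighbors uniformly at random from $B$ (with replacement, for ease of analysis). The goal is to show that for the claimed $d$ and $m$, with positive probability every subset $S \subset A$ with $|S|\leq k$ has $|{\cal N}(S)|>(1-\epsilon)d|S|$, so that a $(k,\epsilon)$-expander exists.

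Next, for a fixed $S$ with $|S|=s\leq k$, I would bound the probability of the ``bad event'' $|{\cal N}(S)|\leq (1-\epsilon)ds$. If this occurs, then at least $\epsilon d s$ of the $ds$ edges leaving $S$ must be ``collisions,'' i.e., land on a right vertex already hit by a previous edge from $S$. A direct counting argument (choose which $\epsilon d s$ edges collide and, for each, choose among the at most $ds$ available target slots on the right) yields
\[
\Pr\bigl[|{\cal N}(S)|\leq (1-\epsilon)ds\bigr]\;\leq\;\binom{ds}{\epsilon ds}\left(\frac{ds}{m}\right)^{\epsilon d s}\;\leq\;\left(\frac{e\,ds}{\epsilon\, m}\right)^{\epsilon d s}.
\]

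I would then union-bound over all $S$ of size $s$ using $\binom{n}{s}\leq (en/s)^s$, and sum over $1\leq s\leq k$. The requirement becomes
\[
\sum_{s=1}^{k}\left(\frac{en}{s}\right)^{s}\left(\frac{e\,ds}{\epsilon\, m}\right)^{\epsilon d s}\;<\;1.
\]
Taking logarithms term-by-term, this is satisfied whenever $m \geq C_1\, ds/\epsilon$ and $\epsilon d \geq C_2 \log(n/s)$ for suitable absolute constants $C_1,C_2$. Setting $d=\Theta(\log(n/k)/\epsilon)$ and $m=\Theta(k\log(n/k)/\epsilon^2)$ dominates every term in the sum simultaneously, so the probability of failure stays strictly below $1$ and an expander with the claimed parameters must exist.

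The main technical obstacle is producing a collision tail bound sharp enough to absorb the entropy term $s\log(n/s)$ from the union bound; a na\"ive estimate relying only on the expected number of collisions (roughly $(ds)^2/m$) would inflate $d$ and $m$ by extraneous logarithmic factors. A second, routine but delicate point is verifying that the worst subset size in the sum is $s=k$ rather than some smaller value; this monotonicity is what forces the logarithmic factor $\log(n/k)$, rather than $\log n$, in the final bounds.
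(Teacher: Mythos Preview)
Your proposal is a correct outline of the standard probabilistic argument for the existence of $(k,\epsilon)$-expanders with these parameters. The paper, however, does not actually prove this proposition: it simply states it as a known fact, remarking only that it is ``proved using probabilistic methods.'' Your approach is precisely that probabilistic method, so there is nothing to compare beyond noting that you have filled in what the paper leaves as a citation. One small quibble: your closing remark that the worst subset size is $s=k$ is slightly too quick, since for small $s$ the entropy term $\log(en/s)$ can exceed $\log(en/k)$; the point that actually saves you is that for $s<k$ the factor $\log\bigl(\epsilon m/(eds)\bigr)$ grows by an additional $\log(k/s)$, which absorbs the corresponding growth in $\log(n/s)$. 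With that adjustment the term-by-term analysis goes through with absolute constants.
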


\noindent One reason why expander graphs are good sensing
candidates is that the adjacency matrix of any expander graph almost
preserves the $\ell_1$ norm of any sparse vector (RIP-1). Berinde et al have shown that the RIP-1 property can
be derived from the expansion property \cite{newIndyk}. In Section \ref{sec:result} we
exhibit the role this property plays in the performance of the maximum
\textit{a posteriori} (MAP) estimation algorithm for recovering sparse
vectors in the presence of the Poisson noise.

\begin{proposition}[RIP-1 property of the expander graphs]
\label{RIP}
Let $A$ be the $m \times n$ adjacency matrix of a $(k,\epsilon)$
expander graph $G$. Then for any $k$-sparse vector $x\in \mathbb{R}^n$
we have:
\begin{equation}
\label{ripeq}
(1-2\epsilon)d \|x\|_1  \leq \|Ax\|_1 \leq d~ \|x\|_1
\end{equation}
\end{proposition}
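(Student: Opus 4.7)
The plan is to separate the upper and lower bounds, since the upper bound is essentially immediate while the lower bound is where the expansion property has to do real work. For the upper bound, I will use the fact that the graph is $d$-left-regular, so every column of $A$ has exactly $d$ ones. The triangle inequality then gives
\[
\|Ax\|_1 \;=\; \sum_{j=1}^{m}\Bigl|\sum_{i} A_{ji}x_i\Bigr| \;\leq\; \sum_i |x_i|\sum_j A_{ji} \;=\; d\,\|x\|_1,
\]
which yields the right-hand side of \eqref{ripeq} with no use of the expansion property.

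For the lower bound, let $S=\mathrm{supp}(x)$, so $|S|\le k$. Order the elements of $S$ as $i_1,i_2,\ldots,i_s$ with $|x_{i_1}|\ge|x_{i_2}|\ge\cdots\ge|x_{i_s}|$. I will introduce the notion of a \emph{collision edge}: an edge $(i_t,j)$ is called a collision if the parity node $j$ is already a neighbor of some $i_u$ with $u<t$. Let $r_t$ denote the number of collision edges incident to $i_t$. Two observations link the $r_t$ to expansion. First, when the $t$-th column is added, the number of \emph{new} parity nodes that appear in the neighborhood is exactly $d-r_t$, so $|{\cal N}(S_t)|=\sum_{u\le t}(d-r_u)$ where $S_t=\{i_1,\ldots,i_t\}$. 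Since $|S_t|\le k$, the expansion property applied to $S_t$ gives $\sum_{u\le t}r_u<\epsilon d\,t$ for every $t$.

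Second, for each parity node $j$ let $i^\star_j$ be the earliest (i.e.\ largest magnitude) element of $S\cap{\cal N}(\{j\})$. The triangle inequality in the form
\[
\Bigl|\sum_{i\in N_j}x_i\Bigr| \;\geq\; |x_{i^\star_j}| \;-\; \sum_{i\in N_j,\, i\neq i^\star_j}|x_i|
\]
together with the fact that the non-$i^\star_j$ edges are precisely the collision edges, gives, after summing over $j$,
\[
\|Ax\|_1 \;\geq\; \sum_{t=1}^{s}(d-r_t)\,|x_{i_t}| \;-\; \sum_{t=1}^{s} r_t\,|x_{i_t}| \;=\; d\,\|x\|_1 \;-\; 2\sum_{t=1}^{s} r_t\,|x_{i_t}|.
\]

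The main obstacle is the last step: I need to convert the uniform bound $\sum_{u\le t}r_u<\epsilon d\,t$ into the weighted bound $\sum_t r_t\,|x_{i_t}|\le \epsilon d\,\|x\|_1$. I will do this by Abel's summation (summation by parts): writing $R_t=\sum_{u\le t}r_u$, one gets
\[
\sum_{t=1}^{s} r_t|x_{i_t}| \;=\; R_s|x_{i_s}| + \sum_{t=1}^{s-1} R_t\bigl(|x_{i_t}|-|x_{i_{t+1}}|\bigr),
\]
and because the differences $|x_{i_t}|-|x_{i_{t+1}}|$ are nonnegative (thanks to the decreasing ordering) one may substitute $R_t\le\epsilon d\,t$. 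A telescoping computation then collapses $s|x_{i_s}|+\sum_{t<s}t(|x_{i_t}|-|x_{i_{t+1}}|)$ to $\|x\|_1$, producing $\sum_t r_t|x_{i_t}|\le\epsilon d\,\|x\|_1$. Plugging this back yields $\|Ax\|_1\ge(1-2\epsilon)d\,\|x\|_1$, completing \eqref{ripeq}. The monotone ordering of the support, which enables the sign-free Abel bound, is the critical idea; everything else is combinatorial bookkeeping driven by the expansion property.
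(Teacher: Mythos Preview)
Your argument is correct. The upper bound is immediate from left-regularity, and your lower-bound argument---ordering the support by decreasing magnitude, defining collision edges, using expansion on the prefixes $S_t$ to get $R_t<\epsilon d\,t$, and then passing from the unweighted prefix bounds to the weighted sum $\sum_t r_t|x_{i_t}|$ via Abel summation---is exactly the standard proof due to Berinde, Gilbert, Indyk, Karloff, and Strauss.

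As for comparison: the paper does not actually prove Proposition~\ref{RIP}. It simply attributes the RIP-1 property to Berinde et al.\ \cite{newIndyk} and states the inequality without argument. So there is no in-paper proof to compare against; what you have written is precisely the argument the paper is citing. One small remark: in your penultimate displayed inequality you should restrict the sum over $j$ to $j\in{\cal N}(S)$ so that $i^\star_j$ is well-defined, but this is cosmetic since the omitted rows contribute zero to $\|Ax\|_1$.
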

The following theorem is a direct consequence of the RIP-1
property. This theorem states that, for any almost $k$-sparse vector\footnote{By ``almost sparsity" we mean that the vector has at most
  $k$ significant entries.}
$u$, if there exists a vector $v$ whose $\ell_1$ norm is close to that of $u$, and if $v$ approximates $u$ in the measurement domain, then $v$
properly approximates $u$. In Section
\ref{sec:result} we show that the proposed MAP decoding algorithm
outputs a vector satisfying the two conditions above, and hence
approximately recovers the desired signal.
\begin{theorem}
\label{piotr}
Let $A$ be the adjacency matrix of a $(k,\epsilon)$-expander and
$u,v$ be two vectors in $\mathbb{R}^n$, such
that
%\begin{equation}
%\label{delta}
$$
\|u\|_1\geq
  \|v\|_1-\Delta
$$
%\end{equation}
  for some positive $\Delta$. Let $S$ be
the set of $k$ largest (in magnitude) coefficients of $u$, and
$\overline{S}$ be the set of remaining coefficients. Then $\|u-v\|_1$
is upper-bounded by
%\begin{equation}
$$ 
 \frac{(1-2\epsilon)}{(1-6\epsilon)}\left(2\|u_{\overline{S}}\|_1+\Delta\right)
 + \frac{2}{d(1-6\epsilon)}\|Au-Av\|_1. 
$$
%\end{equation}
\end{theorem}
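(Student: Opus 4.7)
The plan is to follow the $\ell_1$-recovery argument of Berinde et al.\ for expander-based compressed sensing, adapted to carry the slack $\Delta$ through the bookkeeping. Throughout, set $x := u - v$; the goal is to bound $\|x\|_1 = \|x_S\|_1 + \|x_{\overline{S}}\|_1$ in terms of $\|u_{\overline{S}}\|_1$, $\Delta$, and $\|Ax\|_1 = \|Au - Av\|_1$.

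First I would convert the hypothesis $\|u\|_1 \geq \|v\|_1 - \Delta$ into a comparison between the head and tail of $x$. Splitting $\|v\|_1 = \|u_S - x_S\|_1 + \|u_{\overline{S}} - x_{\overline{S}}\|_1$ and applying the reverse triangle inequality on each piece yields
\[
\|u_S\|_1 - \|x_S\|_1 + \|x_{\overline{S}}\|_1 - \|u_{\overline{S}}\|_1 \;\leq\; \|v\|_1 \;\leq\; \|u\|_1 + \Delta \;=\; \|u_S\|_1 + \|u_{\overline{S}}\|_1 + \Delta,
\]
which rearranges to the head/tail inequality $\|x_{\overline{S}}\|_1 \leq \|x_S\|_1 + 2\|u_{\overline{S}}\|_1 + \Delta$; call this $(\ast)$.

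Next I would produce a measurement-domain bound on $\|x_S\|_1$ via the RIP-$1$ of Proposition~\ref{RIP}. The naive chain $\|Ax_S\|_1 \leq \|Ax\|_1 + \|A x_{\overline{S}}\|_1 \leq \|Ax\|_1 + d\|x_{\overline{S}}\|_1$ loses a full factor of $d$ on the tail, which when combined with $(\ast)$ collapses the coefficient of $\|x_S\|_1$ to a negative value and fails. The standard remedy is to sort the entries of $x$ on $\overline{S}$ by magnitude and partition them into consecutive blocks $T_1,T_2,\ldots$ of size $k$, then apply the expansion property to $S$ and to each $T_j$ in turn while counting edge collisions at the right endpoints. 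Because the sorting yields $\|x_{T_{j+1}}\|_{\infty} \leq \|x_{T_j}\|_1/k$, the per-block collision losses can be amortized so that their total contribution is $O(\epsilon)\|x_{\overline{S}}\|_1$ rather than $\|x_{\overline{S}}\|_1$, producing a refined inequality of the form
\[
d(1 - 4\epsilon)\,\|x_S\|_1 \;\leq\; \|Ax\|_1 + 2d\epsilon\,\|x_{\overline{S}}\|_1, \quad (\dagger)
\]
in which the crucial improvement is the $\epsilon$-prefactor on the $\|x_{\overline{S}}\|_1$ term.

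Finally I would combine $(\ast)$ and $(\dagger)$: substituting $(\ast)$ into the tail of $(\dagger)$ and collecting $\|x_S\|_1$ on the left gives $d(1-6\epsilon)\|x_S\|_1 \leq \|Ax\|_1 + 2d\epsilon(2\|u_{\overline{S}}\|_1 + \Delta)$. Dividing through, adding $\|x_{\overline{S}}\|_1$ to both sides, and re-invoking $(\ast)$ once more converts the coefficient of $(2\|u_{\overline{S}}\|_1 + \Delta)$ into $\frac{4\epsilon}{1-6\epsilon} + 1 = \frac{1-2\epsilon}{1-6\epsilon}$, recovering the stated bound. The nonroutine step is the amortized inequality $(\dagger)$: since Proposition~\ref{RIP} only provides expansion for $k$-sparse supports, the argument must carefully exploit the sorted-magnitude structure block by block so that only $O(\epsilon)$ rather than $O(1)$ appears on the tail, and it is exactly this bookkeeping that determines the $1-4\epsilon$ and hence the $1-6\epsilon$ constants in the final bound.
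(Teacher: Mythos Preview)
Your proposal is correct and follows essentially the same route as the paper: your head/tail inequality $(\ast)$ is exactly the paper's triangle-inequality step $\|v\|_1 \ge \|u\|_1 - 2\|u_{\overline S}\|_1 + \|u-v\|_1 - 2\|(u-v)_S\|_1$, and your amortized inequality $(\dagger)$ is a trivial rearrangement of the paper's cited lemma $\|Au-Av\|_1 + 2d\epsilon\|y\|_1 \ge (1-2\epsilon)d\|y_S\|_1$ (just move $2d\epsilon\|y_S\|_1$ across). The only difference is bookkeeping order---the paper substitutes the bound on $\|y_S\|_1$ into a single chain and rearranges once at the end, whereas you isolate $\|x_S\|_1$ first and then invoke $(\ast)$ a second time---but the constants and logic coincide.
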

\begin{proof}
  Let $y=u-v$, and $\langle S_1, \cdots, S_t\rangle$ be a decreasing
  partitioning of $\overline{S}$ (with respect to coefficient
  magnitudes), such that all sets but (possibly) $S_t$ have size
  $k$. Note that $S_0 = S$. Let $\At$ be a submatrix of $A$ containing
  rows from ${\cal N}(S)$. Then, following the argument of Berinde et al.~\cite{newIndyk}, we have the following inequality:
\begin{equation}
\label{lemma}
\|Au-Av\|_1+2d\epsilon\|y\|_1 \geq (1-2\epsilon)d \|y_S\|_1.
\end{equation}
Now, using the triangle inequality and Eq.~(\ref{lemma}), we obtain
\begin{align*}\nonumber
\|u\|_1 &\geq \|v\|_1-\Delta\\ \nonumber 
&\geq \|u\|_1-2\|u_{\overline{S}}\|_1\\ \nonumber &
\quad\quad+ \|(u-v)\|_1-2\|{(u-v)}_S\|_1-\Delta
\\ \nonumber
&\geq \|u\|_1-2\|u_{\overline{S}}\|_1-\Delta+\|u-v\|_1\\ \nonumber &\quad\quad-\frac{2\|Au-Av\|_1+4d\epsilon\|u-v\|_1}{(1-2\epsilon)d}.
\end{align*}
Rearranging the inequality completes the proof.
\end{proof}

Finally, note that, since the graph is regular, there exists a minimal
set $\ii$ of variable (left) nodes with size at most $m$, such that
its neighborhood covers all of the check nodes, i.e ${\cal
  N}(\ii)={B}$. Let $\io$ be an index vector such that
%\begin{equation}
$$
(\io)_i=\begin{cases}
	1& \mbox{ if } i\in \ii \\ $0$&\mbox{otherwise}
	\end{cases}
	$$
%\end{equation}
where $(\io)_i$ denotes the $i$th entry of $\io$. Then $A\io \succeq
I_{m\times 1}.$ The role of $\ii$ is to guarantee that recovery
candidates are non-zero vectors in the measurement domain.  This is
crucial in compressed sensing with Poisson noise, and we will explain
this issue in detail in the next sections.

 \section{Compressed Sensing in the presence of Poisson Noise}
 \label{sec:poisson}
 Recall that a signal is defined to be ``\textit{almost $k$-sparse}''
if it has at most $k$ significant entries, while the remaining
entries have near-zero values. Let $\xs_k$ be the best $k$-term
approximation of $\xs$, and $\Phii_{m\times n}$ be the sensing
matrix. Let $\mathbb{Z}_+=\{0,1,2,\cdots\}$. We assume that each entry
of the measured vector $\yb \in \w^m$ is sensed independently
according to a Poisson model:
$$\yb \sim \mbox{Poisson}\left(\Phii \xs\right).$$
That is, for each index $j$ in $\{1,\cdots,m\}$, the random variable
$Y_j$ is sampled from a Poisson distribution with mean $(\Phii\xs)_j$:\begin{equation}
\label{poisson}
\Pr\left[Y_j|(\Phii \xs)_j\right]=
\begin{cases}
	\frac{(\Phii\xs)_j^{Y_j}}{Y_j!}e^{-(\Phii \xs)_j}& \mbox{if } (\Phii\xs )_j \neq 0
	\\ \boldsymbol{\delta}(Y_j) & \mbox{else}
\end{cases}
\end{equation}
where
%\begin{equation}
%\label{delta}
$$
\boldsymbol{\delta}(Y_j)=
\begin{cases}
	1& \mbox{if } Y_j = 0
	\\ 0 & \mbox{else}
\end{cases}
$$
%\end{equation}
Note that 
%\begin{equation}
$$
\lim_{(\Phii\xs)_j\rightarrow 0} \frac{(\Phii\xs)_j^{Y_j}}{Y_j!}e^{-(\Phii \xs)_j}=\delta(Y_j).
$$
%\end{equation}

We use MAP (maximum \post probability) decoding for recovering a good estimate for $\xs$,
given measurements $\yb$ in the presence of the Poisson noise. Let
%\begin{equation}
%\label{theta}
$$
\Theta=\left\{f_1,\cdots,f_{|\Theta|}   \right\}
$$
%\end{equation}
be a set of candidate estimates for $\xs$ such that
$$\forall~f_i\in\Theta~:~ \|f_i\|_1=1~,~f_i\succeq 0.$$
We would like to find the best possible \post estimate, given the
observation vector $\yb$. Moreover, to maintain consistency between the maximum
likelihood and the MAP decoding, we impose the requirement that no
candidate MAP estimator can have a zero coordinate if the
corresponding measurement is non-zero. To guarantee this, let
$\lambda\ll \frac{1}{k \log n}$ be a small parameter. We define
\begin{equation}
\label{gamma}
\Gamma=\left\{x_i=f_i+\lambda \io   \right\}.
\end{equation}
Then since $\lambda$ is strictly positive, we will have  $\Phi x\succ 0$ for any estimate $x$ in
$\Gamma$. This allows us to run the MAP decoding over the set $\Gamma$ and output the
(one-to-one) corresponding estimate from $\Theta$.  We show this
precisely in the next section. This relaxation allows the MAP decoding
to work properly and guarantees recovering an estimate from $\Theta$
with expected $\ell_1$ error close to the error of the best estimate
in $\Theta$.

Let $\text{pen}(x)$ be a nonnegative penalty function based on our prior knowledge
about the estimates in $\Gamma$ (or equivalently let
$\widehat{\text{pen}}(\theta)$ be a penalty function over
$\Theta$). The only constraint that we impose on the penalty function
is the {\em Kraft inequality}
%\begin{{equation}
%\label{constatint}
%\begin{equation} \label{constraint}
$$\sum_{\xb\in \Gamma} e^{-\text{pen}(\xb)} \leq1.$$ 
%\end{equation}
%\end{equation}\end{document}
For instance, we can impose less penalty on sparser signals or construct a penalty based
on any other prior knowledge about the underlying signal. The log-likelihood of the measurement, according to Eq.~(\ref{poisson}), is 
\begin{eqnarray}
\label{loglikelihood}
\log {\cal L}(\yb|\xb)&=& 
\sum_{j=1}^m{\log \Pr\left[y_j|(\Phii \xb)_j\right]}\\ \nonumber
&\propto& \sum_{j=1}^m
-(\Phii \xb)_j +y_j \log \left(\Phii \xb\right)_j. 
\end{eqnarray}
We will show that the maximum 
\post  estimate
\begin{equation}
\label{min}
\xh\doteq \argmin_{\xb \in \Gamma}\left\{ \sum_{j=1}^m (\Phii \xb)_j -y_j
\log \left(\Phii \xb\right)_j+2\text{pen}(\xb) \right\}
\end{equation}
has error close to the error of the best estimate in $\Gamma$. The decoding in (\ref{min}) is a MAP algorithm over the set of estimates
$\Gamma$, where the likelihood is computed according to the Poisson model (\ref{poisson}) and the penalty function corresponds to a negative
log prior on the candidate estimators in $\Gamma$.

 \section{Performance of MAP Recovery on Almost Sparse Signals   }
 \label{sec:result}
 Let $A$ be the $m \times n$ adjacency matrix of a $(2k,1/16)$-expander
with left degree $d$. Also let $ \Phii= \frac{A}{d}$ be the
sensing matrix. From definition of $\io$ and $\Gamma$, and since the
adjacency matrix of any graph only consists of zeros and ones, for any
estimate $\xb \in \Gamma$ we have $\Phii \xb \succeq
\frac{\lambda}{d}$.  Moreover, from the RIP-1 property of the expander
graphs stated in Lemma~(\ref{RIP}) we know that for any signal $\xb$,
$\|\Phii \xb\|_1 \leq \|\xb\|_1$, and $(1-2\epsilon) \|\xb\|_1 \leq \|\Phii\xb\|_1$ for any $k$-sparse signal $\xb$. Hence
by definition of $\Gamma$
\begin{equation}
\label{mona}
\forall~ \xb\in\Gamma:~~\frac{m\lambda}{d} \leq \|\Phii \xb\|_1 \leq 1+\frac{m\lambda}{d}.
\end{equation}
\begin{lemma}
\label{l1}
Let $\A$ be the normalized expander sensing matrix, $\xs$ be the
original $k$-sparse signal and $\xh$ be the minimizer of the
Equation~(\ref{min}). Then
\begin{eqnarray*}
\lefteqn{\|\A(\xs-\xh)\|_1^2} \nonumber \\ &\leq& 2 \left(2+\frac{m\lambda}{d}\right)  \sum_{i=1}^m \left|(\A\xs)_i^{1/2}-(\A\xh)_i^{1/2} \right|^2
\end{eqnarray*}
\end{lemma}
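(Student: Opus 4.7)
The plan is to factor each entry-wise difference as a difference of squares and then apply Cauchy--Schwarz. For any nonnegative reals $a,b$ we have $a-b = (\sqrt a - \sqrt b)(\sqrt a + \sqrt b)$, and since both $\A\xs$ and $\A\xh$ have nonnegative entries (the expander adjacency matrix has $0/1$ entries and the signals/candidates lie in the nonnegative cone), I would write
$$
\|\A(\xs-\xh)\|_1 = \sum_{i=1}^m \bigl|(\A\xs)_i^{1/2}-(\A\xh)_i^{1/2}\bigr|\,\bigl((\A\xs)_i^{1/2}+(\A\xh)_i^{1/2}\bigr).
$$

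Next I would apply Cauchy--Schwarz to split this into the ``square-root Hellinger'' factor (which already appears on the right-hand side of the claim) and a remainder factor involving $\sum_i \bigl((\A\xs)_i^{1/2}+(\A\xh)_i^{1/2}\bigr)^2$. Using the elementary inequality $(\sqrt a + \sqrt b)^2 \le 2(a+b)$ entrywise and summing, this remainder is bounded by $2\bigl(\|\A\xs\|_1 + \|\A\xh\|_1\bigr)$. Squaring the resulting Cauchy--Schwarz inequality then yields a bound of the form
$$
\|\A(\xs-\xh)\|_1^2 \le 2\bigl(\|\A\xs\|_1 + \|\A\xh\|_1\bigr)\sum_{i=1}^m \bigl|(\A\xs)_i^{1/2}-(\A\xh)_i^{1/2}\bigr|^2,
$$
so the target inequality reduces to showing $\|\A\xs\|_1 + \|\A\xh\|_1 \le 2 + \tfrac{m\lambda}{d}$.

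For that final step I would invoke the RIP-1 upper bound from Proposition~\ref{RIP} (applied to $\A = A/d$, which gives $\|\A\xs\|_1 \le \|\xs\|_1 = 1$ for the normalized $k$-sparse signal), together with the upper bound $\|\A\xh\|_1 \le 1 + \tfrac{m\lambda}{d}$ from equation~(\ref{mona}), which is valid for any $\xh \in \Gamma$. Adding these two bounds gives exactly the constant $2 + \tfrac{m\lambda}{d}$ needed.

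The whole argument is essentially mechanical; there is no real obstacle beyond (i) remembering to square the Cauchy--Schwarz inequality before identifying factors, and (ii) carefully combining the RIP-1 upper bound on $\|\A\xs\|_1$ with the $\Gamma$-specific bound (\ref{mona}) on $\|\A\xh\|_1$ to produce the stated constant. The key conceptual point, which is why the lemma matters for the subsequent Poisson analysis, is that the $\ell_1$ reconstruction error in the measurement domain is controlled by the squared Hellinger-type distance between $\A\xs$ and $\A\xh$, and this controllable quantity is precisely what the Poisson log-likelihood in (\ref{min}) interacts with.
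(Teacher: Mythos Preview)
Your argument is correct and matches the paper's proof essentially step for step: factor each entry as a difference of squares, apply Cauchy--Schwarz, use $(\sqrt a+\sqrt b)^2\le 2(a+b)$, and then invoke the RIP-1 upper bound for $\|\A\xs\|_1\le 1$ together with Eq.~(\ref{mona}) for $\|\A\xh\|_1\le 1+\tfrac{m\lambda}{d}$. The only cosmetic difference is that the paper squares first and writes the Cauchy--Schwarz step as a double sum over $i,j$, whereas you apply Cauchy--Schwarz to $\|\A(\xs-\xh)\|_1$ and square afterwards.
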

\begin{proof}Let $\ybs=\A\xs$ and $\ybh=\A\xh$. Then
\begin{align*}
& \|\ybs-\ybh\|_1^2 = \sum_{i=1}^m \left|(\ybs)_i-(\ybh)_i\right|^2\\ \nonumber
&\leq \sum_{i,j=1}^m \left|(\ybs)_i^{1/2}-(\ybh)_i^{1/2} \right|^2. \left| (\ybs)_j^{1/2}+(\ybh)_j^{1/2} \right|^2
\\ \nonumber &\leq
2\sum_{i=1}^m\left| (\ybs)_i^{1/2}-(\ybh)_i^{1/2} \right|^2. \sum_{j=1}^m \left| (\ybs)_j+(\ybh)_j \right|
\\ \nonumber &\leq
2 \left(2+\frac{m\lambda}{d}\right) \sum_{i=1}^m \left|(\ybs)_i^{1/2}-(\ybh)_i^{1/2} \right|^2.
\end{align*}
The first and the second inequalities are by Cauchy--Schwarz, while the third inequality is a consequence of the RIP-1 property of the expander graphs (Lemma~\ref{RIP}) and Eq.~(\ref{mona}).
\end{proof}

\begin{lemma}
\label{int}
Given two Poisson parameter vectors $\la,\lb \in \mathbb{R}^m_+$, the
following equality holds:
\begin{align*}
2 \log \frac{1}{\left(\int{\sqrt{p(\Yb|\la)p(\Yb|\lb)}d\nu(\yb)}
    \right)} = \sum_{j=1}^m
  \left((\la)_j^{1/2}-(\lb)_j^{1/2}\right). 
\end{align*}
\end{lemma}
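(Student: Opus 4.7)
The plan is to recognize the left-hand side as (twice) the negative log of the Bhattacharyya coefficient between two product Poisson laws and evaluate it in closed form; note that the stated right-hand side is almost certainly missing a square, since the Hellinger affinity between Poissons produces a sum of squared differences of square roots.

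First, I would exploit independence. Since each coordinate $Y_j$ is Poisson with mean $(\la)_j$ (respectively $(\lb)_j$) independently of the others, the joint mass function factorizes, and the integral against counting measure $\nu$ on $\w^m$ splits as
\begin{equation*}
\int \sqrt{p(\Yb|\la)\,p(\Yb|\lb)}\,d\nu(\yb) = \prod_{j=1}^m \sum_{Y=0}^{\infty} \sqrt{p(Y|(\la)_j)\,p(Y|(\lb)_j)}.
\end{equation*}

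Next I would compute one factor explicitly. Plugging in the Poisson mass function from Eq.~(\ref{poisson}),
\begin{equation*}
\sum_{Y=0}^{\infty}\sqrt{\frac{(\la)_j^{Y} e^{-(\la)_j}}{Y!}\cdot\frac{(\lb)_j^{Y} e^{-(\lb)_j}}{Y!}} = e^{-\frac{(\la)_j+(\lb)_j}{2}}\sum_{Y=0}^{\infty}\frac{\bigl((\la)_j(\lb)_j\bigr)^{Y/2}}{Y!}.
\end{equation*}
Recognizing the remaining sum as the Taylor series of $\exp\!\bigl(\sqrt{(\la)_j(\lb)_j}\bigr)$ and completing the square in the exponent gives the coordinate factor $\exp\!\bigl(-\tfrac{1}{2}\bigl((\la)_j^{1/2}-(\lb)_j^{1/2}\bigr)^{2}\bigr)$. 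The degenerate cases $(\la)_j=0$ or $(\lb)_j=0$ are handled by the convention $p(Y|0)=\boldsymbol{\delta}(Y)$ in Eq.~(\ref{poisson}), which makes the single-coordinate sum equal $e^{-(\la)_j/2}$ or $e^{-(\lb)_j/2}$ respectively, matching the closed form.

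Finally, I would take the product over $j$, pass to logarithms, and multiply by $-2$ to obtain
\begin{equation*}
2\log\frac{1}{\int\sqrt{p(\Yb|\la)\,p(\Yb|\lb)}\,d\nu(\yb)} = \sum_{j=1}^{m}\bigl((\la)_j^{1/2}-(\lb)_j^{1/2}\bigr)^{2},
\end{equation*}
which is the identity stated (up to the missing square). There is no real obstacle: the only non-trivial step is spotting that the inner series is the Taylor expansion of $e^{\sqrt{(\la)_j(\lb)_j}}$, after which everything reduces to elementary algebra.
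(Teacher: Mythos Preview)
Your proposal is correct and follows exactly the approach the paper indicates: the paper's proof says only that it ``follows from expanding the term $\int\sqrt{p(\Yb|\la)p(\Yb|\lb)}\,d\nu(\yb)$'' and cites \cite{wr}, which is precisely the Bhattacharyya-coefficient computation you carry out (factorize by independence, sum the Poisson square-root series to $e^{\sqrt{(\la)_j(\lb)_j}}$, complete the square). You are also right that the stated right-hand side is missing a square; the downstream use of this lemma in Lemma~\ref{l2} involves $\sum_i\bigl|(\A\xs)_i^{1/2}-(\A\xh)_i^{1/2}\bigr|^2$, confirming that $\sum_{j}\bigl((\la)_j^{1/2}-(\lb)_j^{1/2}\bigr)^2$ is what is intended.
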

\begin{proof}
The proof follows from expanding the term $\int{\sqrt{p(\Yb|\la)p(\Yb|\lb)}d\nu(\yb)}$, and is provided in \cite{wr}.
\end{proof}

\begin{lemma}
\label{l2}
Let $\A$ be the expander sensing matrix, $\xs$ be the original almost $k$-sparse signal, and $\xh$ be a minimizer in Eq.~(\ref{min}). Finally let $\yb$ be the compressive measurements of $\xs$ in Poisson model. Then
\begin{eqnarray}
\label{18}
\lefteqn{\Ex_{\Yb|\A\xs}\left[ \sum_{i=1}^m
    \left|(\A\xs)_i^{1/2}-(\A\xh)_i^{1/2} \right|^2 \right]}\\
  \nonumber  &\leq& \min_{\xb \in \Gamma} \left[
    \RE\left(p(\Yb|\A\xs)\parallel
      p(\Yb|\A\xb)\right)+2\text{pen}(\xb)\right]. 
\end{eqnarray}
\end{lemma}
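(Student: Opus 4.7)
The plan is to adapt the standard penalized maximum-likelihood risk bound argument (of Li--Barron / Kolaczyk--Nowak type) to the Poisson--expander setting. First I would rewrite the left-hand side of (\ref{18}) using Lemma~\ref{int}: setting $Z_\xb(\yb) := \sqrt{p(\yb|\A\xb)/p(\yb|\A\xs)}$ and $H(\xb) := \Ex_{\Yb|\A\xs}[Z_\xb(\Yb)]$ (the Hellinger affinity between $p(\Yb|\A\xs)$ and $p(\Yb|\A\xb)$), the lemma identifies $\sum_i |(\A\xs)_i^{1/2}-(\A\xb)_i^{1/2}|^2 = -2\log H(\xb)$. Since $\xh=\xh(\Yb)$ is a deterministic function of the data, the desired quantity is simply $\Ex_\Yb[-2\log H(\xh)]$.

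Next I would exploit the defining optimality of $\xh$ in (\ref{min}). For any fixed $\xb\in\Gamma$ the inequality $-\log p(\Yb|\A\xh)+2\text{pen}(\xh)\leq -\log p(\Yb|\A\xb)+2\text{pen}(\xb)$ rewrites as $-2\log Z_\xh(\Yb)+2\text{pen}(\xh)\leq -2\log Z_\xb(\Yb)+2\text{pen}(\xb)$. Splitting $-2\log Z_\xh = -2\log H(\xh) + 2\log[H(\xh)/Z_\xh(\Yb)]$, taking expectations under $\Yb\sim p(\cdot|\A\xs)$, and using $\Ex_\Yb[-2\log Z_\xb(\Yb)] = \RE(p(\Yb|\A\xs)\parallel p(\Yb|\A\xb))$, then minimizing over $\xb\in\Gamma$, yields
$$\Ex\!\left[-2\log H(\xh)\right] + 2\,\Ex\!\left[\log\tfrac{H(\xh)\,e^{\text{pen}(\xh)}}{Z_\xh(\Yb)}\right] \leq \min_{\xb\in\Gamma}\!\left[\RE(p(\Yb|\A\xs)\parallel p(\Yb|\A\xb)) + 2\text{pen}(\xb)\right].$$
The lemma therefore reduces to showing that the residual expectation on the left is nonnegative.

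The crux, which I expect to be the main obstacle, is tying this residual to the Kraft inequality despite the data dependence of $\xh$. My plan is to invoke Jensen's inequality via the identity $\log Y = -\log(1/Y)$, which gives $\Ex[\log(H(\xh)e^{\text{pen}(\xh)}/Z_\xh(\Yb))] \geq -\log \Ex[Z_\xh(\Yb)/(H(\xh)e^{\text{pen}(\xh)})]$, and then to decouple $\xh$ from $\Yb$ via the pessimistic bound
$$\Ex\!\left[\tfrac{Z_\xh(\Yb)}{H(\xh)\,e^{\text{pen}(\xh)}}\right] = \sum_{\xb\in\Gamma}\Ex\!\left[\mathbf{1}[\xh=\xb]\tfrac{Z_\xb(\Yb)}{H(\xb)\,e^{\text{pen}(\xb)}}\right] \leq \sum_{\xb\in\Gamma}\tfrac{\Ex_\Yb[Z_\xb(\Yb)]}{H(\xb)\,e^{\text{pen}(\xb)}} = \sum_{\xb\in\Gamma}e^{-\text{pen}(\xb)} \leq 1,$$
the last step being precisely the Kraft inequality imposed on the penalty. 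Hence the residual is at least $-\log 1 = 0$, closing the argument. A minor technical point I would verify along the way is that the strict positivity of $\A\xb$ for every $\xb\in\Gamma$ --- guaranteed by the $\lambda\io$ shift in (\ref{gamma}) --- keeps the ratios $Z_\xb(\yb)$ and the affinities $H(\xb)$ well-defined and strictly positive, so that all logarithms and divisions above are legitimate.
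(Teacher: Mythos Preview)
Your proposal is correct and is precisely the Li--Barron / Kolaczyk--Nowak argument the paper invokes: the paper's own proof is just a one-line citation to \cite{li} and \cite{nowak}, and you have faithfully reconstructed the standard mechanism (Hellinger-affinity identification via Lemma~\ref{int}, optimality of $\xh$, then Jensen plus the union-bound decoupling to feed the Kraft inequality). The only cosmetic remark is that your Jensen step is more transparently phrased as $\Ex[\log W]\le \log\Ex[W]$ with $W=Z_{\xh}/(H(\xh)e^{\text{pen}(\xh)})$, but what you wrote is equivalent.
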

\begin{proof}
The proof exploits techniques from Li and Baron \cite{li}, and Kolaczyk and Nowak \cite{nowak}.
\end{proof}
Now we show that in Poisson setting for all estimates $\xb$ in $\Gamma$, the relative entropy term $\RE\left(p(\Yb|\A\xs)\parallel p(\Yb|\A\xb)\right)$ is upper bounded by the squared $\ell_1$ norm of $\xs - \xb$:
\begin{lemma}
\label{l3}
For any estimate $\xb\in \Gamma$ the following inequality holds:
%\begin{equation}
$$
\RE\left(p(\Yb|\A\xs)\parallel p(\Yb|\A\xb)\right)\leq \frac{d\|\xs-\xb\|_1^2}{\lambda}.
$$
%\end{equation}
\begin{proof}
\begin{eqnarray}
\nonumber \lefteqn{\RE\left(p(\Yb|\A\xs)\parallel p(\Yb|\A\xb)\right)}
\\ \nonumber &\leq& \sum_{j=1}^m
(\A\xs)_j\left(\frac{(\A\xs)_j}{(\A\xb)_j}-1\right)\\
 \nonumber&&-(\A\xs)_j+(\A\xb)_j
\\ \nonumber &=& \sum_{j=1}^m \frac{1}{(\A\xb)_j}|(\A\xs-\A\xb)_j|^2
\\ \nonumber &\leq& \frac{d}{\lambda}\|\A(\xs-\xb)\|_2^2\\ \nonumber
&\leq& \frac{d}{\lambda}\|\A(\xs-\xb)\|_1^2 \leq
\frac{d}{\lambda}\|\xs-\xb\|_1^2. 
\end{eqnarray}
The first inequality is $\log t\leq t-1$, and the second inequality is by the RIP-1 property of $\Phii$ (Eq.~(\ref{ripeq})) and definition of $\Gamma$ (Eq.~(\ref{gamma})).
\end{proof}
\end{lemma}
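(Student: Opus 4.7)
The plan is to decompose the Poisson KL divergence coordinate-wise, apply the elementary bound $\log t\leq t-1$ to turn each term into a perfect square, exploit the strict positivity of $\A\xb$ guaranteed by the construction of $\Gamma$, and conclude with two standard norm inequalities.

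First, since $\Yb$ has independent Poisson coordinates under both $p(\Yb|\A\xs)$ and $p(\Yb|\A\xb)$, the relative entropy tensorizes as $\RE(p(\Yb|\A\xs)\parallel p(\Yb|\A\xb)) = \sum_{j=1}^{m}\bigl[(\A\xs)_j\log\tfrac{(\A\xs)_j}{(\A\xb)_j}-(\A\xs)_j+(\A\xb)_j\bigr]$. I would then apply $\log t\leq t-1$ with $t=(\A\xs)_j/(\A\xb)_j$, which gives $(\A\xs)_j\log\tfrac{(\A\xs)_j}{(\A\xb)_j}\leq (\A\xs)_j^2/(\A\xb)_j-(\A\xs)_j$. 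Collecting the remaining linear terms, each summand collapses to $\bigl((\A\xs)_j-(\A\xb)_j\bigr)^2/(\A\xb)_j$, so that $\RE\leq\sum_{j}\bigl((\A\xs)_j-(\A\xb)_j\bigr)^2/(\A\xb)_j$.

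The next step is to lower-bound the denominators using the definition of $\Gamma$ in Eq.~(\ref{gamma}). Every $\xb\in\Gamma$ has the form $\xb=f+\lambda\io$ with $f\succeq 0$ and $A\io\succeq I_{m\times 1}$, so $(\A\xb)_j=(A\xb)_j/d\geq \lambda/d$ for all $j$. Inserting this gives $\RE\leq (d/\lambda)\,\|\A(\xs-\xb)\|_2^2$. I would then use $\|v\|_2^2\leq\|v\|_1^2$ (valid for any real vector, since the cross-products added to the right-hand side are nonnegative) and the upper half of the RIP-1 bound in Proposition~\ref{RIP}, namely $\|\A u\|_1\leq\|u\|_1$ (this direction needs no sparsity assumption because $A$ has exactly $d$ ones per column, so $\|A u\|_1 \leq d\|u\|_1$ always). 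Combining these yields $(d/\lambda)\|\xs-\xb\|_1^2$, as claimed.

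The only step that requires any genuine thought is the algebraic collapse: one has to notice that combining the $\log t\leq t-1$ bound with the trailing $-(\A\xs)_j+(\A\xb)_j$ terms produces a perfect square divided by $(\A\xb)_j$, rather than a less convenient mixed expression. Everything else is mechanical. The role of the positivity buffer $\lambda\io$ baked into $\Gamma$ is precisely to keep $(\A\xb)_j$ bounded away from zero, which is the very pathology that would otherwise make the KL blow up; this is why $\Gamma$ was defined as it was, so most of the hard conceptual work had already been done before this lemma was even stated.
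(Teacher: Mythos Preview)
Your proof is correct and follows essentially the same route as the paper's own argument: coordinate-wise Poisson KL, the bound $\log t\le t-1$ producing the perfect square $\bigl((\A\xs)_j-(\A\xb)_j\bigr)^2/(\A\xb)_j$, the lower bound $(\A\xb)_j\ge\lambda/d$ from the definition of $\Gamma$, and then $\|\cdot\|_2^2\le\|\cdot\|_1^2$ followed by the upper RIP-1 inequality. Your write-up is in fact more explicit than the paper's, and your remark that the upper bound $\|\A u\|_1\le\|u\|_1$ requires no sparsity assumption is a useful clarification.
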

\begin{lemma}
\label{l4}
Let $\A$ be the expander sensing matrix, $\xs$ be the original almost $k$-sparse signal, and $\xh$ be a minimizer in Eq.~(\ref{min}). Then 
\begin{equation}
\label{measure}
\Ex\left[\|\A(\xs-\xh)\|_1\right] \le
 \sqrt{6}\min_{\xt \in \Gamma} \sqrt{\frac{d}{\lambda}}\|\xs-\xt\|_1+\sqrt{2\text{pen}(\xt)}
\end{equation}
\end{lemma}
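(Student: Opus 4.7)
The plan is to chain Lemmas \ref{l1}, \ref{l2}, \ref{l3} together, passing from the squared $\ell_1$ error on the measurement side to the squared Hellinger-like quantity, then to a KL divergence, and finally to the signal-domain $\ell_1$ error, absorbing expectations via Jensen's inequality.

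First I would bound $\Ex[\|\A(\xs-\xh)\|_1]$ by $\sqrt{\Ex[\|\A(\xs-\xh)\|_1^2]}$ using Jensen (concavity of the square root). This reduces the task to controlling the expected squared $\ell_1$ error on the measurement side, which is exactly what Lemma \ref{l1} is designed for. Applying it gives
$$\Ex\left[\|\A(\xs-\xh)\|_1^2\right] \le 2\!\left(2+\tfrac{m\lambda}{d}\right)\,\Ex\!\left[\sum_{i=1}^{m} \left|(\A\xs)_i^{1/2}-(\A\xh)_i^{1/2}\right|^2\right].$$

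Next, I would push the expectation inside via Lemma \ref{l2}, replacing the squared Hellinger sum by $\min_{\xt \in \Gamma}\left[\RE(p(\Yb|\A\xs)\parallel p(\Yb|\A\xt))+2\text{pen}(\xt)\right]$, and then use Lemma \ref{l3} to upper-bound the KL term by $\frac{d}{\lambda}\|\xs-\xt\|_1^{\,2}$. At this stage I have
$$\Ex\!\left[\|\A(\xs-\xh)\|_1\right] \le \sqrt{2\!\left(2+\tfrac{m\lambda}{d}\right)}\;\sqrt{\min_{\xt \in \Gamma}\left[\tfrac{d}{\lambda}\|\xs-\xt\|_1^{\,2}+2\text{pen}(\xt)\right]}.$$

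Finally, I would apply the elementary inequality $\sqrt{a+b}\le \sqrt{a}+\sqrt{b}$ to split the minimand into the error term and the penalty term, and observe that since $\lambda\ll \tfrac{1}{k\log n}$ the constant $2(2+m\lambda/d)$ is bounded by $6$ (indeed $m\lambda/d \le 1$ in the relevant regime), yielding the advertised factor $\sqrt{6}$. Passing the minimum through the sum of square roots is monotone, so the bound remains valid. The main subtlety rather than obstacle is checking the regime on $\lambda$ that legitimizes replacing $\sqrt{2(2+m\lambda/d)}$ with the clean constant $\sqrt{6}$; everything else is a mechanical concatenation of the preceding lemmas with Jensen's inequality and subadditivity of the square root.
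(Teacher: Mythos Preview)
Your proposal is correct and follows essentially the same route as the paper: chain Lemmas~\ref{l1}, \ref{l2}, and \ref{l3}, invoke the regime $\lambda \ll \tfrac{1}{k\log(n/k)}$ together with $m=O(k\log(n/k))$ to replace $2(2+m\lambda/d)$ by $6$, and then pass from the bound on $\Ex[\|\A(\xs-\xh)\|_1^2]$ to a bound on $\Ex[\|\A(\xs-\xh)\|_1]$ via Jensen and $\sqrt{a+b}\le\sqrt{a}+\sqrt{b}$. The only cosmetic difference is that you apply Jensen at the outset while the paper applies it at the end; the content is identical.
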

\begin{proof}
 Lemmas \ref{l1}, \ref{l2}, and \ref{l3} together imply
\begin{eqnarray*}
\lefteqn{\Ex\left[\|\A(\xs-\xh)\|_1^2\right]}\\ \nonumber &\leq&
2\left(2+\frac{m\lambda}{d}\right)\min_{\xt\in\Gamma}\left(\frac{d}{\lambda}\|\xs-\xt\|_1^2+2\text{pen}(\xt)\right).
\end{eqnarray*}
Since $\lambda\ll \frac{1}{k\log(n/k)}$, and $m=O\left(k\log
  (n/k)\right)$, the ratio $\frac{m\lambda}{d}$ is much less than
$1$. So $2+\frac{m\lambda}{d}\ll 3$, and 
%\begin{equation}
$$
\Ex\left[\|\A(\xs-\xh)\|_1^2\right] \leq
 6\min_{\xt\in\Gamma}\left(\frac{d}{\lambda}\|\xs-\xt\|_1^2+2\text{pen}(\xt)\right).
 $$
%\end{equation}
Now since the function $f(\xb)=\|A\xb+b\|_1^2$ is convex and the square
root function is strictly increasing, by applying Jensen's inequality
we get 
\begin{eqnarray*}
{\Ex\left[\|\A(\xs-\xh)\|_1\right]}\leq \sqrt{6}\min_{\xt \in \Gamma} \left(\sqrt{\frac{d}{\lambda}}\|\xs-\xt\|_1+\sqrt{2\text{pen}(\xt)}\right).
\end{eqnarray*}
\end{proof}
\begin{theorem}
Let $\A$ be the expander sensing matrix, $\lambda\ll
\frac{1}{k\log(n/k)}$ be a small positive value, $\xs$ be the original
almost $k$-sparse signal compressively sampled in the presence of
Poisson noise, $\xh$ be a minimizer in
Eq.~(\ref{min}), and $\that$ be the corresponding estimate in $\Theta$, i.e
$\xh=\that+\lambda\io$. Then 
\begin{eqnarray}\nonumber
\lefteqn{\Ex\left[\|\xs-\that\|_1\right]\leq\lambda m+
4\|\xs_{\overline{S}}\|_1+2\lambda m+3\sqrt{6}}\\ \nonumber 
&&\times \left(\min_{\ttil \in \Theta}
  \sqrt{\frac{d}{\lambda}}\left(\|\xs-\ttil\|_1+\lambda
    m\right)+\sqrt{2\hat{\text{pen}}(\ttil)}\right). 
\end{eqnarray}
\end{theorem}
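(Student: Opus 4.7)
The plan is to chain Theorem~\ref{piotr} with Lemma~\ref{l4} and then translate everything from the internal MAP estimate $\xh$ and the enlarged set $\Gamma$ back to the final estimate $\that$ and the original set $\Theta$ via a pair of triangle inequalities. The expander $\ell_1$ decoding guarantee and the Hellinger-style MAP oracle inequality already do all of the real work; what remains is essentially bookkeeping.

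First I would apply Theorem~\ref{piotr} with $u = \xs$ and $v = \xh$. Because $\xh = \that + \lambda\io$ with $\|\that\|_1 = 1$ (by definition of $\Theta$) and $\|\io\|_1 \le m$, under the normalization $\|\xs\|_1 = 1$ the admissible slack satisfies $\Delta \le \lambda m$. With $\epsilon = 1/16$, the two constants of Theorem~\ref{piotr} are $\tfrac{1-2\epsilon}{1-6\epsilon} = \tfrac{7}{5}$ and $\tfrac{2}{1-6\epsilon} = \tfrac{16}{5}$. Using $\A = A/d$ so that $\|A\xs - A\xh\|_1 = d\,\|\A(\xs-\xh)\|_1$, and rounding to the slightly looser constants in the statement, I would obtain
\[
\|\xs-\xh\|_1 \;\le\; 4\,\|\xs_{\overline{S}}\|_1 + 2\lambda m + \tfrac{16}{5}\,\|\A(\xs-\xh)\|_1.
\]

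Next I would take expectation over $\Yb\mid\A\xs$ and invoke Lemma~\ref{l4} to bound $\Ex\|\A(\xs-\xh)\|_1$ by $\sqrt{6}\,\min_{\xt\in\Gamma}\bigl(\sqrt{d/\lambda}\,\|\xs-\xt\|_1 + \sqrt{2\,\text{pen}(\xt)}\bigr)$. The minimum over $\Gamma$ is then converted to a minimum over $\Theta$ through the bijection $\xt = \ttil + \lambda\io$: the triangle inequality yields $\|\xs-\xt\|_1 \le \|\xs-\ttil\|_1 + \lambda m$, and $\text{pen}(\xt) = \hat{\text{pen}}(\ttil)$ holds by construction of $\Gamma$. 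A final triangle inequality $\|\xs-\that\|_1 \le \|\xs-\xh\|_1 + \|\lambda\io\|_1 \le \|\xs-\xh\|_1 + \lambda m$ contributes the isolated $\lambda m$ term at the front of the stated bound. Assembling these three pieces and taking expectations yields the claim.

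The main obstacle is purely bookkeeping: tracking the constants through $\epsilon = 1/16$ so that the coefficients $4$, $2$, and the multiplicative factor $3\sqrt{6}$ in front of the $\min_{\ttil\in\Theta}$ expression are recovered (some mild slackness is needed to fold $16/5$ into a form compatible with the stated $3\sqrt{6}$), and justifying the bound $\Delta \le \lambda m$ from the normalization $\|\xs\|_1 = 1$ implicit in the compressed-sensing setup. No new conceptual ingredient is required beyond the results already established in Sections~\ref{sec:expander} and \ref{sec:result}.
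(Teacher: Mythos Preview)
Your proposal is correct and follows essentially the same route as the paper: apply Theorem~\ref{piotr} with $u=\xs$, $v=\xh$, $\Delta=\lambda m$ (justified exactly as you say, from $\|\xs\|_1=1$ and $\|\xh\|_1\le 1+\lambda m$), take expectations and plug in Lemma~\ref{l4}, then use two triangle inequalities to pass from $(\xh,\Gamma)$ to $(\that,\Theta)$. Your observation that the raw constant $\tfrac{2}{1-6\epsilon}=\tfrac{16}{5}$ at $\epsilon=1/16$ slightly exceeds the $3$ that the paper writes is accurate; this is a harmless constant discrepancy in the paper's own bookkeeping, not a flaw in your argument.
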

\begin{proof}
In Lemma~\ref{l4}, we have bounded $\|\A(\xs-\xh)\|_1$. Now we can use
Theorem~\ref{piotr} to bound $\|\xs-\xh\|_1$. We have used a
$(2k,1/16)$-expander. Also since $\|\xs\|_1=1$, and any $x$ in
$\Gamma$ has the form $\theta+\lambda\io$ where $\|\theta\|_1=1$, and
$\|\io\|_1=m$, and since $\xh\in \Gamma$, we get 
$\|\xh\|_1\leq \|\that\|_1+\lambda\|\io\|_1$ and hence
$$\|\xs\|_1 \geq \|\xh\|_1-\lambda m.$$ As a result, by
Theorem~\ref{piotr} and Lemma~\ref{l4} we get 
\begin{eqnarray*}
\label{hoss}
\lefteqn{\Ex\left[\|\xs-\xh\|_1\right]\leq
  4\|\xs_{\overline{S}}\|_1+2\lambda m+}\\ 
\nonumber &&3\left(\sqrt{6}\min_{\xt \in \Gamma} \sqrt{\frac{d}{\lambda}}\|\xs-\xt\|_1+\sqrt{2\text{pen}(\xt)}\right).
\end{eqnarray*}
Consequently, we have derived a bound on how much $\xh$ differs from $\xs$. Since any $\xb$ in $\Gamma$ has the form $\tb+\lambda\io$ for some estimate $\tb$ in $\Theta$, using the triangle inequality we get
%\begin{equation}
$$
\|\xs-\tb\|_1\leq \|\xs-\xb\|_1+\lambda\|\io\|_1=\|\xs-\xb\|_1+\lambda m,
$$
%\end{equation}
and so 
\begin{eqnarray*}
\nonumber \lefteqn{\Ex\left[\|\xs-\that\|_1\right]\leq\lambda m+ 4\|\xs_{\overline{S}}\|_1+2\lambda m+}\\
\nonumber&& 3\left(\sqrt{6}\min_{\ttil \in \Theta} \sqrt{\frac{d}{\lambda}}\left(\|\xs-\ttil\|_1+\lambda m\right)+\sqrt{2\hat{\text{pen}}(\ttil)}\right).
\end{eqnarray*}
\end{proof}
By substituting the values $m=O\left(k\log(n/k)\right)$, and $d=O\left(\log(n/k)\right)$, and choosing $$\lambda\ll \frac{1}{k\log(n/k)}$$ we can guarantee that $\Ex\left[\|\xs-\that\|_1\right]$ is of order 
\begin{eqnarray}
\label{becca}
 %\nonumber 
 \|\xs_{\overline{S}}\|_1+\min_{\ttil \in \Theta} \left(\sqrt{k}\log\left(\frac{n}{k}\right)\|\xs-\ttil\|_1+\sqrt{2\hat{\text{pen}}(\ttil)}\right).
\end{eqnarray}
\modified{\begin{remark}
It has been shown by Willett \textit{et.al.} \cite{wr, rhmw} that, using random dense matrices, the MAP reconstruction algorithm can reconstruct a signal $\xs$ satisfying $\| \xs \|_1 = 1$ with the expected error of 
\begin{equation}\label{previous}\Ex\left[\|\xs-\that\|^2_2\right]\leq m\left[ \min_{\ttil \in \Theta} \|\xs-\ttil\|^2_2+\hat{\text{pen}}(\ttil)\right]+\frac{\log \frac{n}{m}}{m}.\end{equation}
Hence, for random dense matrices there is an $O\left(m^{-1}\right)$ min-max approximation error. This error cannot be made arbitrarily small by increasing the number of measurements as the first term in (\ref{previous}) also depends on $m$.  However, as stated earlier, the bounds of \cite{wr, rhmw} are not restricted to signals that are sparse in the canonical basis.\end{remark}
}

%\begin{figure*}[ht*]
%\begin{center}
%  \subfigure[Probability of successful recovery of $k$-sparse signals
%  using expander graphs and MAP decoding of the
%  Equation~(\ref{min}).]{\label{fig1}
%\includegraphics[width=2.5in,height=1.65in]{sparse}}
%\subfigure[Relative $\ell_1$ error of the MAP decoding for $k$-compressible signals.]{\label{fig2}
%\includegraphics[width=2.8in,height=1.65in]{cp3}}
%%\subfigure[The impact of changing the left degree $d$ on the relative error of the MAP decoding over  almost $k$-sparse signals. ]{\label{fig3}
%%\includegraphics[width=2.6in,height=1.65in]{d2}}
%\caption{\label{change} The impact of changing signal and matrix sparsity on the performance of the MAP recovery algorithm}
%\end{center}
%\end{figure*}
 \section{Experimental Results}
 \label{sec:experiment}
 To validate our results via simulation, we generated random sparse signals, simulated Poisson observations of the signal multiplied by the proposed expander graph sensing matrix, and reconstructed the signal using the proposed objective function in (\ref{min}). 

Each signal was a length $n=100,000$ signal with $k$ non-zero elements, where $k$ ranged from $1$ to $4,000$. Each of the non-zero elements was assigned intensity $I$, where $I$ was $10$, $100$, $1,000$, or $10,000$. The locations of the non-zero elements were selected uniformly at random for each trial.
The sensing matrix was a scaled adjacency matrix of an expander graph, as described earlier, with $d = 16$ and the number of rows $m=40,000$. 

Reconstruction was performed using a method described in
\cite{fesslerRecon} for reconstruction of sparse signals from indirect
Poisson measurements, precisely the situation encountered here. The
penalty function used in this implementation is proportional to
$\|\xb\|_1$; constructing a penalty function of this form which
satisfies the Kraft inequality is a subject of ongoing work. (The
authors would like to thank Mr. Zachary Harmany for his assistance
with the implementation of this algorithm.) After each trial, the
normalized $\ell_1$ error was computed as $\|\xs-\xh\|_1 / \|\xs\|_1$,
and the errors were averaged over $50$ trials.  The results of this
experiment are presented in Figure~\ref{fig:results}.

\begin{figure}
\begin{center}
\includegraphics[width=\columnwidth]{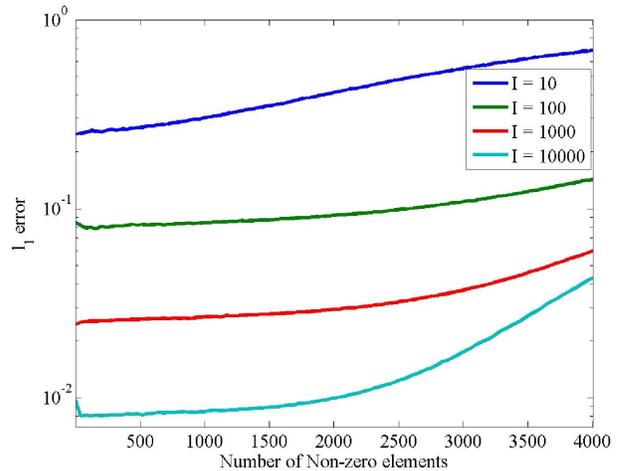}
\end{center}
\caption{Performance of reconstruction from Poisson measurements of
  expander CS data for different intensity and sparsity levels.}
\label{fig:results}
\end{figure}

 \section{Conclusions}
 \label{sec:conclusion}
 \modified{
 In this paper we investigated the advantages of expander-based sensing over  dense random sensing in the presence of Poisson noise.
 Even though Poisson model is essential in some applications, dealing with this noise model is challenging as the noise is not bounded, or even as concentrated as Gaussian noise, and is signal-dependent. 
 }\modified{
 Here we proposed using normalized adjacency matrices of expander graphs as an alternative construction of sensing matrices, and we showed that the binary nature and the RIP-1 property of these matrices yield provable consistency for a MAP reconstruction algorithm.
 }\modified{
 }
\section*{Acknowledgements}
\modified{The authors would like to thank Zachary Harmany for his assistance with the implementation of the reconstruction algorithm, and Piotr Indyk for his insightful comments on the performance of the expander graphs.}
 \bibliographystyle{unsrt} 
 \bibliography{main} 

\begin{thebibliography}{10}

\bibitem{donoho}
D.~Donoho.
\newblock Compressed sensing.
\newblock {\em IEEE Trans. Inform. Theory}, 52(4):1289--1306, April 2006.

\bibitem{CRT}
E.~Cand\`{e}s, J.~Romberg, and T.~Tao.
\newblock Stable signal recovery from incomplete and inaccurate measurements.
\newblock {\em Commun. Pure Appl. Math.}, 59(8):1207--1223, 2006.

\bibitem{SnyderCCD}
D.~Snyder, A.~Hammond, and R.~White.
\newblock Image recovery from data acquired with a charge-coupled-device
  camera.
\newblock {\em J. Opt. Soc. Amer. A}, 10:1014--1023, 1993.

\bibitem{wr}
R.~Willett and M.~Raginsky.
\newblock Performance bounds on compressed sensing with {P}oisson noise.
\newblock In {\em Proc. IEEE Int. Symp. on Inform. Theory}, pages 174--178,
  Seoul, Korea, Jun/Jul 2009.

\bibitem{rhmw}
M.~Raginsky, Z.~Harmany, R.~Marcia, and R.~Willett.
\newblock Compressed sensing performance bounds under {P}oisson noise.
\newblock {\em IEEE Trans. Signal Process.}, 2009.
\newblock Submitted.

\bibitem{Sina}
S.~Jafarpour, W.~Xu, B.~Hassibi, and R.~Calderbank.
\newblock Efficient and robust compressed sensing using optimized expander
  graphs.
\newblock {\em IEEE Trans. Inform. Theory}, 55(9):4299--4308, September 2009.

\bibitem{newIndyk}
R.~Berinde, A.~Gilbert, P.~Indyk, H.~Karloff, and M.~Strauss.
\newblock Combining geometry and combinatorics: a unified approach to sparse
  signal recovery.
\newblock {\em 46th Annual Allerton Conference on Communication, Control, and
  Computing}, pages 798--805, September 2008.

\bibitem{BIR}
R.~Berinde, P.~Indyk, and M.~Ruzic.
\newblock Practical near-optimal sparse recovery in the $\ell_1$ norm.
\newblock {\em 46th Annual Allerton Conf. on Comm., Control, and Computing},
  2008.

\bibitem{li}
J.~Q. Li and A.~Barron.
\newblock Mixture density estimation.
\newblock {\em Advances in Neural Information Processing}, 2000.

\bibitem{nowak}
W.~Bajwa, J.~Haupt, G.~Raz, S.~Wright, and R.~Nowak.
\newblock Toeplitz-structured compressed sensing matrices.
\newblock {\em Proc. IEEE Workshop on Statist. Signal Process.}, 2007.

\bibitem{fesslerRecon}
D.~J. Lingenfelter, J.~A. Fessler, and Z.~He.
\newblock Sparsity regularization for image reconstruction with poisson data.
\newblock In {\em Computational Imaging VII. Proceedings of the SPIE, Volume
  7246}, 2009.

\end{thebibliography}
%\appendices
%\input{appa}
%\input{appb}
\end{document}